\documentclass[11pt, a4paper]{article}

\usepackage[utf8]{inputenc}    
\usepackage[T1]{fontenc}       
\usepackage[english]{babel}    
\usepackage{amsmath, amssymb, amsthm, mathtools} 
\usepackage{bm}                
\usepackage{textcomp}          
\usepackage{enumitem}          
\usepackage{geometry}          
\usepackage{graphicx}          
\usepackage{tikz-cd}		   
\usepackage{xcolor}            
\DeclareGraphicsExtensions{.pdf,.eps,.jpg,.png}
\graphicspath{{figures/}{figure/}{pictures/}{picture/}{pic/}{pics/}{image/}{images/}}

\usepackage[colorlinks=true, allcolors=blue]{hyperref} 
\usepackage[capitalize]{cleveref}
\usepackage[numbers, sort&compress]{natbib}

\makeatletter

\newcommand{\Rmnum}[1]{\expandafter\@slowromancap\romannumeral #1@}
\makeatother

\newcommand{\E}{\mathbb{E}}
\newcommand{\Pp}{\mathbb{P}}
\newcommand{\Var}{\operatorname{Var}}
\newcommand{\cE}{\mathcal{E}}
\newcommand{\free}{\mathrm{free}}
\newcommand{\surf}{\mathrm{surf}}
\newcommand{\per}{\mathrm{per}}

\theoremstyle{plain}
\newtheorem{theorem}{Theorem}
\newtheorem{lemma}[theorem]{Lemma}     
\newtheorem{proposition}[theorem]{Proposition}
\newtheorem{corollary}[theorem]{Corollary}

\theoremstyle{definition}

\theoremstyle{remark}
\newtheorem*{remark}{Remark}

\renewcommand{\[}{\begin{equation}}
\renewcommand{\]}{\end{equation}}

\title{\Large\bf Subextensive Random Boundary Perturbations in the Short-Range Edwards--Anderson Model}
\author{Hexiang Wang \and Keheng Zhu \and Mauris Chueng}
\newcommand{\Addresses}{{
		\bigskip
		\footnotesize
		
		\textsc{Hexiang Wang}, \textsc{School of Mathematical Sciences, Nankai University, Tianjin, 300071, China}\par\nopagebreak
		\texttt{Kui6539@outlook.com}
		\medskip
		
		\textsc{Keheng Zhu}, \textsc{Academy for Multidisciplinary Studies, School of Mathematics Sciences, Capital Normal
			University, Beijing, 100048, China}\par\nopagebreak
		\texttt{hexistartop@gmail.com}
		\medskip
		
		\textsc{Mauris Chueng}, \textsc{School of Statistics and Data Science, Jilin University of Finance and Economics, Changchun, 130117, China}\par\nopagebreak
		\texttt{maurischueng@gmail.com}
		\medskip
}}
\date{}

\begin{document}

\maketitle
\begin{abstract}
We consider the nearest-neighbor Edwards--Anderson Ising model on cubic boxes with random perturbations supported at the boundary.  We prove that any perturbation admitting an energy envelope that is negligible compared with the volume, both in expectation and almost surely, leaves the limiting quenched specific free energy unchanged.  For independent finite-variance bulk disorder in dimension at least two, an Efron--Stein estimate also yields almost-sure self-averaging along the full sequence of boxes.  The hypotheses are verified for i.i.d. scalar surface fields, fixed random exterior spins, and periodic wrap-around bonds, with an $ O(L^{-1}) $ comparison of quenched means.  The result concerns the specific free energy and does not assert convergence of finite-volume Gibbs measures.
\end{abstract}

\section{Introduction}

The thermodynamic limit of the quenched pressure for finite-dimensional disordered systems is well established; see, for example, \cite{GoulartRosa1982,ContucciGiardinaPule2004,ContucciStarr2009}.  For a finite-range model, changing the boundary condition modifies only a surface-order family of interactions.  The resulting independence of the limiting specific free energy is therefore expected, but it is useful to record a precise pathwise statement that also gives almost-sure self-averaging under minimal moment assumptions.

Random boundary conditions have a second, substantially more delicate role.  Even when the specific free energy converges, the associated finite-volume Gibbs measures may fail to converge and may exhibit chaotic size dependence~\cite{NewmanStein1992}.  This distinction and the associated metastate formalism are discussed in \cite{AizenmanWehr1990,NewmanStein1997,EndoVanEnterLeNy2019}; rigorous low-temperature results for the ferromagnetic Ising model with random exterior spins appear in \cite{VanEnterNetocnySchaap2005}.  The present note does not address that state-selection problem.  Its purpose is to isolate the boundary-stability argument for the specific free energy and to distinguish a random scalar surface field from a literal exterior-spin boundary condition.

\section{Model and main results}

Fix $ d\geq2 $, $ \beta>0 $, and
\[
\Lambda_L=\{1,\ldots,L\}^d,
\qquad V_L=|\Lambda_L|=L^d.
\]
Let $ \cE_L $ be the set of unoriented internal nearest-neighbor edges.  For $ L\geq2 $,
\begin{equation}
	\label{eq:geometry}
|\cE_L|=d(L-1)L^{d-1},
\qquad
|\partial\Lambda_L|=L^d-(L-2)^d\leq 2dL^{d-1}.
\end{equation}
The spin space is $ \{-1,+1\}^{\Lambda_L} $.  Let $ (J_e) $ be i.i.d. couplings satisfying
\begin{equation}
	\label{eq:bulk}
\E J_e=0,
\qquad
\Var(J_e)=v_J<\infty.
\end{equation}
The bulk field is either deterministic, $ g_x=h\in\mathbb{R} $, or an i.i.d. field independent of $ J $ with $ \Var(g_x)=v_g<\infty $.  In the deterministic case we set $ v_g=0 $.  The free-boundary Hamiltonian is
\begin{equation}
	\label{eq:hfree}
H_L^{\free}(\sigma)
=-\sum_{e=\langle x,y\rangle\in\cE_L}J_e\sigma_x\sigma_y
-\sum_{x\in\Lambda_L}g_x\sigma_x.
\end{equation}

For a boundary perturbation $ B_L $, put
\begin{equation}
	\label{eq:hb}
H_L^B(\sigma)=H_L^{\free}(\sigma)+B_L(\sigma).
\end{equation}

\begin{figure}[htbp]
	\label{fig:ea_random_boundary_structure}
	\centering
	\includegraphics[width=0.9\textwidth]{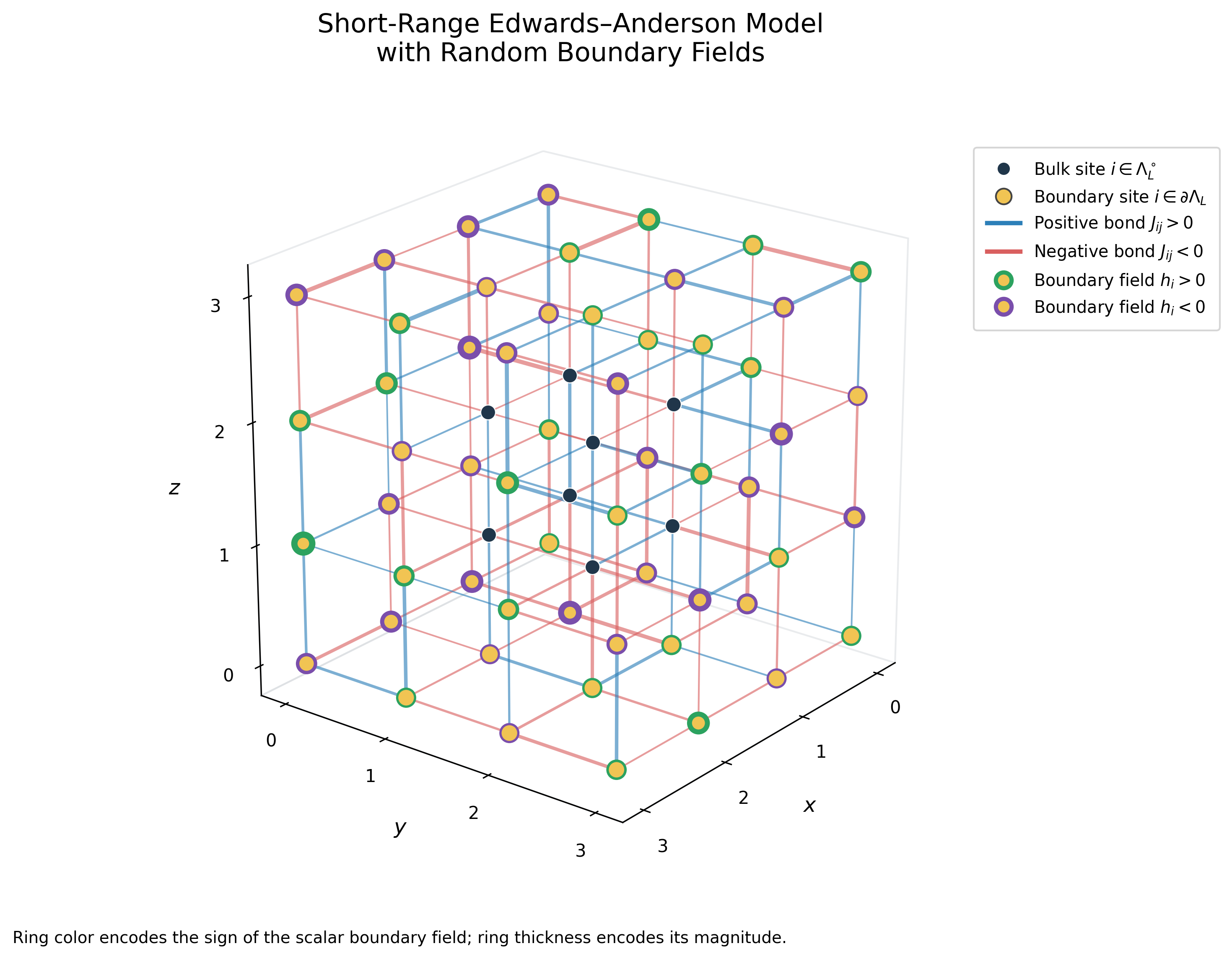}
	\caption{Geometric realization of the 3D Edwards-Anderson model under subextensive random boundary perturbations. The interior nodes represent bulk spins governed by random couplings $J_e$, while the highlighted surface layer $\partial\Lambda_L$ accommodates independent boundary fields or exterior configuration forces $B_L(\sigma)$, visually illustrating the partition defined in \eqref{eq:geometry} and \eqref{eq:hb}.}
\end{figure}

The spatial layout and the interaction boundaries are schematically illustrated in Figure~\ref{fig:ea_random_boundary_structure}. Whenever $ B_L $ is random, we assume that it is measurable and that the
resulting sample specific free energy is integrable.

Whenever $ B_L $ is random, we assume that it is measurable and that the
resulting sample specific free energy is integrable.
For $ a\in\{\free,B\} $, define
\begin{equation}
	\label{eq:freeenergy}
Z_L^a=\sum_{\sigma}\exp\bigl(-\beta H_L^a(\sigma)\bigr),
\qquad
\widehat f_L^a=-\frac{1}{\beta V_L}\log Z_L^a,
\qquad
f_L^a=\E\widehat f_L^a.
\end{equation}
All variables for all $ L $ are defined on one probability space whenever an almost-sure limit along the entire sequence is asserted.

\begin{theorem}[Subextensive boundary perturbations]\label{thm:main}
Suppose there are integrable random variables $ A_L\geq0 $ such that
\begin{equation}
	\label{eq:envelope}
\sup_{\sigma}|B_L(\sigma)|\leq A_L,
\qquad
\frac{\E A_L}{V_L}\longrightarrow0,
\qquad
\frac{A_L}{V_L}\longrightarrow0
\quad\Pp\text{-almost surely}.
\end{equation}
Then there is a deterministic constant $ f_\infty $ such that
\begin{equation}
	\label{eq:main-conclusions}
f_L^B\longrightarrow f_\infty,
\qquad
\widehat f_L^B-f_L^B\stackrel{a.s.}{\longrightarrow}0,
\end{equation}
and consequently
\begin{equation}
	\label{eq:main-as-limit}
\widehat f_L^B\stackrel{a.s.}{\longrightarrow} f_\infty.
\end{equation}
The constant $ f_\infty $ is the free-boundary thermodynamic limit.
\end{theorem}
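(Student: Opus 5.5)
The plan is to reduce everything to the free-boundary model. The reduction is a pathwise comparison of partition functions. Since $\sup_\sigma|B_L(\sigma)|\le A_L$, every term in $Z_L^B$ lies between $e^{-\beta A_L}$ and $e^{\beta A_L}$ times the corresponding term of $Z_L^{\free}$. Hence, deterministically on every sample,
\begin{equation*}
\bigl|\widehat f_L^B-\widehat f_L^{\free}\bigr|\le \frac{A_L}{V_L}.
\end{equation*}
Taking expectations gives $|f_L^B-f_L^{\free}|\le \E A_L/V_L\to0$. The almost-sure hypothesis in \eqref{eq:envelope} gives $\widehat f_L^B-\widehat f_L^{\free}\to0$ almost surely. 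So it suffices to prove two facts for the free-boundary model: first, $f_L^{\free}\to f_\infty$; second, $\widehat f_L^{\free}-f_L^{\free}\to0$ almost surely. The conclusions for $B$ then follow by the triangle inequality, and \eqref{eq:main-as-limit} is the sum of the two statements in \eqref{eq:main-conclusions}.

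For convergence of the quenched mean $f_L^{\free}$, I will use the standard subadditivity argument. Partition $\Lambda_{L}$ into $k^d$ translates of $\Lambda_M$ plus a remainder, with $L=kM+r$ and $0\le r<M$. Removing the bonds that cross block boundaries, and the bonds and fields in the remainder layer, changes $\log Z$ by at most $\beta\sum|J_e|+\beta\sum|g_x|$ over the removed set, plus $\log 2$ per remainder site. In expectation this is $O(kM^{d-1}\cdot k^{d-1}+rL^{d-1})\,(\E|J|+\E|g|+1)$. By translation invariance the blocks are identically distributed, so
\begin{equation*}
V_Lf_L^{\free}\le k^dM^df_M^{\free}+C\bigl(L^d/M+ML^{d-1}\bigr),
\end{equation*}
and the reverse inequality holds the same way. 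The quantity $f_L^{\free}$ is bounded: $|f_L|\le \log2/\beta+d\,\E|J|+\E|g|$. Letting $L\to\infty$ and then $M\to\infty$ therefore shows that $\limsup f_L$ and $\liminf f_L$ coincide. This defines $f_\infty$.

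The main obstacle is almost-sure self-averaging along the full sequence, since only second moments are available. I will apply Efron--Stein to $F=\log Z_L^{\free}$ as a function of the independent variables $(J_e)$ and $(g_x)$. Resampling one coupling changes $F$ by at most $\beta|J_e-J_e'|$, so its contribution is at most $\tfrac12\beta^2\E(J_e-J_e')^2=\beta^2 v_J$; fields contribute $\beta^2v_g$ each. This gives
\begin{equation*}
\Var(\widehat f_L^{\free})\le \frac{|\cE_L|v_J+V_Lv_g}{V_L^2}\le \frac{d\,v_J+v_g}{L^d}.
\end{equation*}
Chebyshev's inequality then yields $\Pp(|\widehat f_L-f_L|>\epsilon)\le C\epsilon^{-2}L^{-d}$. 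Because $d\ge2$, this is summable over $L$, and Borel--Cantelli gives almost-sure convergence along every $L$ with no subsequence or interpolation needed. This is exactly where $d\ge2$ and finite variance enter. Combining with the first paragraph finishes the proof and identifies $f_\infty$ as the free-boundary limit.
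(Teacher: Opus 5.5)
Your proposal is correct. Its skeleton is the same as the paper's: the pathwise bound $|\widehat f_L^B-\widehat f_L^{\free}|\le A_L/V_L$, the Efron--Stein/Chebyshev/Borel--Cantelli self-averaging using $\sum_L L^{-d}<\infty$, and the closing triangle inequality correspond to Lemma~\ref{lem:comparison}, Proposition~\ref{prop:selfaverage} and the paper's proof of Theorem~\ref{thm:main}.

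Where you genuinely differ is the free-boundary mean limit.

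\emph{The paper's route.} It proves Proposition~\ref{prop:meanlimit} by exact superadditivity. Conditioning on everything except the cross-piece couplings $u$, the map $u\mapsto\log Z_\Lambda(u)$ is convex. Conditional Jensen together with $\E J_e=0$ then gives $P(\Lambda)\ge\sum_i P(\Lambda_i)$ with no error term. The remainder is discarded using nonnegativity of singleton pressures (from $Z\ge 2^{|\Lambda|}$). The result is $\lim_L P(\Lambda_L)/L^d=\sup_m P(\Lambda_m)/m^d$, which also yields the one-sided bound $f_m^{\free}\ge f_\infty$ for every $m$.

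\emph{Your route.} You delete the cross-block and remainder interactions and pay $\beta\sum|J_e|+\beta\sum_{x\in R}|g_x|+|R|\log 2$ in $L^1$. This gives two-sided approximate additivity with error $C(1/M+M/L)$. Your argument never uses the centering of the couplings; it needs only $\E|J_e|<\infty$, $\E|g_x|<\infty$ and translation invariance of the disorder. So it also covers biased or ferromagnetic disorder. Letting $L\to\infty$ also gives the explicit rate $|f_M^{\free}-f_\infty|\le C/M$, which the paper does not state.

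\emph{Comparison.} The paper's argument is shorter and needs no bond counting. However, it relies on the mean-zero, independent structure of the Edwards--Anderson couplings to get an inequality that is exact rather than approximate.
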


The following two models verify the envelope hypothesis.

\begin{corollary}[Random scalar surface field]\label{cor:surface}
Let $ (\eta_x)_{x\in\mathbb Z^d} $ be i.i.d. with $ \E\eta_0^2<\infty $, and define
\begin{equation}
H_L^{\surf}(\sigma)=H_L^{\free}(\sigma)
-\sum_{x\in\partial\Lambda_L}\eta_x\sigma_x.
\label{eq:surface-hamiltonian}
\end{equation}
Then the conclusions of Theorem~\ref{thm:main} hold and
\begin{equation}
|f_L^{\surf}-f_L^{\free}|
\leq \frac{2d\E|\eta_0|}{L}.
\label{eq:surface-rate}
\end{equation}
\end{corollary}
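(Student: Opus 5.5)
The plan is to apply Theorem~\ref{thm:main} with $B_L(\sigma)=-\sum_{x\in\partial\Lambda_L}\eta_x\sigma_x$ and the envelope $A_L=\sum_{x\in\partial\Lambda_L}|\eta_x|$. Since $|\sigma_x|=1$, we have $\sup_\sigma|B_L(\sigma)|\le A_L$ pointwise. Measurability of $B_L$ is clear. Integrability of $\widehat f_L^{\surf}$ will follow from the bound $|\widehat f_L^{\surf}-\widehat f_L^{\free}|\le A_L/V_L$ derived below, together with integrability of $\widehat f_L^{\free}$, which is controlled by $\sum_e|J_e|+\sum_x|g_x|$ and $\log 2$.

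\emph{Expectation condition.} By \eqref{eq:geometry}, $\E A_L=|\partial\Lambda_L|\,\E|\eta_0|\le 2dL^{d-1}\E|\eta_0|$, so $\E A_L/V_L\le 2d\E|\eta_0|/L\to0$. Here $\E|\eta_0|<\infty$ because $\E\eta_0^2<\infty$.

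\emph{Almost-sure condition.} This is the step needing some care, since all boxes live on one probability space and the summands $\eta_x$ are shared across different $L$. I would avoid any law of large numbers. By Cauchy--Schwarz,
\[
A_L\le |\partial\Lambda_L|^{1/2}\Bigl(\sum_{x\in\partial\Lambda_L}\eta_x^2\Bigr)^{1/2}.
\]
Alternatively, use Chebyshev plus Borel--Cantelli directly. Since $\Var(A_L)\le|\partial\Lambda_L|\E\eta_0^2$, for $\epsilon>0$ we get
\[
\Pp\bigl(|A_L-\E A_L|>\epsilon V_L\bigr)\le \frac{2dL^{d-1}\E\eta_0^2}{\epsilon^2L^{2d}}.
\]
This is summable in $L$ since $d+1\ge 3>1$. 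Borel--Cantelli then gives $(A_L-\E A_L)/V_L\to0$ almost surely. Combined with $\E A_L/V_L\to0$, this yields $A_L/V_L\to0$ almost surely. This is precisely where the second-moment hypothesis is used, and it makes the argument independent of how the $\eta_x$ are shared among boxes.

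\emph{Rate \eqref{eq:surface-rate}.} I would use the elementary comparison
\[
e^{-\beta A_L}Z_L^{\free}\le Z_L^{\surf}\le e^{\beta A_L}Z_L^{\free},
\]
which holds because $|H_L^{\surf}-H_L^{\free}|\le A_L$ uniformly in $\sigma$. Taking logarithms gives $|\widehat f_L^{\surf}-\widehat f_L^{\free}|\le A_L/V_L$. Taking expectations and using the bound on $\E A_L$ from the expectation condition then gives $|f_L^{\surf}-f_L^{\free}|\le 2d\E|\eta_0|L^{d-1}/L^d=2d\E|\eta_0|/L$. With the envelope verified, all conclusions of Theorem~\ref{thm:main} follow. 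The only nontrivial point is the almost-sure envelope, which the summable Chebyshev bound handles.
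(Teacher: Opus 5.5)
Your proposal is correct and follows the paper's route: the envelope $A_L=\sum_{x\in\partial\Lambda_L}|\eta_x|$, Chebyshev plus Borel--Cantelli with a summable $O(L^{-d-1})$ bound (this is exactly the paper's surface-envelope lemma), and the deterministic partition-function comparison, which after taking expectations gives the rate $2d\E|\eta_0|/L$. The Cauchy--Schwarz aside is never used and can simply be dropped.
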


\begin{corollary}[Exterior-spin boundary condition]\label{cor:exterior}
Let
\[
\cE_L^\partial
=\bigl\{\langle x,y\rangle:x\in\Lambda_L,\ y\notin\Lambda_L,
\ |x-y|_1=1\bigr\},
\]
so that $ |\cE_L^\partial|=2dL^{d-1} $.  Assume that all crossing couplings
$ J_e^\partial $ have the same law as a random variable $ J_0^\partial $ with
$ \E|J_0^\partial|^2<\infty $, and are independent within each
$ \cE_L^\partial $.  For any measurable, possibly random and
disorder-dependent, exterior configuration
$ \tau^{(L)}\in\{-1,+1\}^{\Lambda_L^c} $, define
\begin{equation}
	\label{eq:exterior-hamiltonian}
H_L^\tau(\sigma)=H_L^{\free}(\sigma)
-\sum_{\langle x,y\rangle\in\cE_L^\partial}
J_{xy}^\partial\sigma_x\tau_y^{(L)}.
\end{equation}
Then the conclusions of Theorem~\ref{thm:main} hold uniformly over $ \tau^{(L)} $, and
\begin{equation}
	\label{eq:exterior-rate}
|f_L^\tau-f_L^{\free}|
\leq \frac{2d\E|J_0^\partial|}{L}.
\end{equation}
\end{corollary}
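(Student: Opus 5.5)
The plan is to reduce Corollary~\ref{cor:exterior} to Theorem~\ref{thm:main} by exhibiting an energy envelope that does not depend on $\tau^{(L)}$. Set
\[
B_L(\sigma)=-\sum_{\langle x,y\rangle\in\cE_L^\partial}J_{xy}^\partial\sigma_x\tau_y^{(L)},
\qquad
A_L=\sum_{e\in\cE_L^\partial}|J_e^\partial| .
\]
Since $|\sigma_x\tau_y^{(L)}|=1$, we get $\sup_\sigma|B_L(\sigma)|\le A_L$ for \emph{every} exterior configuration, random or not. Hence measurability and dependence of $\tau^{(L)}$ on the disorder are irrelevant once $A_L$ has been checked, and the resulting bounds are uniform in $\tau^{(L)}$. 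For the expectation, $\E A_L=2dL^{d-1}\E|J_0^\partial|$, so $\E A_L/V_L=2d\E|J_0^\partial|/L\to0$. Integrability of $\widehat f_L^\tau$ also follows, because $|\widehat f_L^\tau-\widehat f_L^\free|\le A_L/V_L$ and $\widehat f_L^\free$ is integrable under \eqref{eq:bulk}.

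The rate \eqref{eq:exterior-rate} follows from the elementary comparison $|\log Z_L^\tau-\log Z_L^\free|\le\beta\sup_\sigma|B_L(\sigma)|$. This holds because each Boltzmann weight changes by a factor lying in $[e^{-\beta A_L},e^{\beta A_L}]$. Dividing by $\beta V_L$ and taking expectations gives
\[
|f_L^\tau-f_L^\free|\le \E A_L/V_L=2d\E|J_0^\partial|/L .
\]

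The main obstacle is the almost-sure condition $A_L/V_L\to0$. The crossing couplings are assumed independent only \emph{within} each $\cE_L^\partial$, and the joint law across different $L$ is unspecified, so the strong law along a single i.i.d. sequence cannot be invoked. I would use the second moment and Borel--Cantelli directly. Write $A_L/V_L=\E A_L/V_L+(A_L-\E A_L)/V_L$. The first term is $O(L^{-1})$. For the second, independence within the layer gives
\[
\Var(A_L)\le 2dL^{d-1}\E|J_0^\partial|^2 .
\]
By Chebyshev, for any $\varepsilon>0$,
\[
\Pp\bigl(|A_L-\E A_L|>\varepsilon L^d\bigr)\le \frac{2d\E|J_0^\partial|^2}{\varepsilon^2L^{d+1}} .
\]
This is summable in $L$ since $d+1\ge3$. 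Borel--Cantelli then yields $(A_L-\E A_L)/V_L\to0$ almost surely, with no assumption on the coupling across different $L$. This is the step where $\E|J_0^\partial|^2<\infty$ is genuinely used.

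With all three parts of \eqref{eq:envelope} verified, Theorem~\ref{thm:main} gives the conclusions \eqref{eq:main-conclusions} and \eqref{eq:main-as-limit}. These hold uniformly over $\tau^{(L)}$, because the envelope $A_L$ and every estimate above are independent of $\tau^{(L)}$.
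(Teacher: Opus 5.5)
Your proposal is correct and follows essentially the same route as the paper. You use the same $\tau$-independent envelope $A_L=\sum_{e\in\cE_L^\partial}|J_e^\partial|$, the deterministic comparison of Lemma~\ref{lem:comparison} for the rate, and a Chebyshev--Borel--Cantelli step that is Lemma~\ref{lem:envelopes} written out inline; your check that $\widehat f_L^\tau$ is integrable is a small addition the paper simply assumes.
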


\begin{corollary}[Periodic boundary condition]\label{cor:periodic}
Couple the free and periodic models so that they share all internal couplings,
and let $ \mathcal W_L $ be the $ dL^{d-1} $ wrap-around bonds.  Assume that
all wrap-around couplings have the same law as a random variable $ J_0 $ with
$ \E|J_0|^2<\infty $, and are independent within each $ \mathcal W_L $.  Then
\begin{equation}
	\label{eq:periodic-rate}
|f_L^{\per}-f_L^{\free}|
\leq \frac{d\E|J_0|}{L},
\end{equation}
and the random specific free energies differ by a quantity converging almost surely to zero.
\end{corollary}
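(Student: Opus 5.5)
The plan is to treat the wrap-around bonds as a boundary perturbation and use the elementary comparison behind Theorem~\ref{thm:main}. Since the free and periodic models share all internal couplings, we can write
\[
H_L^{\per}(\sigma)=H_L^{\free}(\sigma)+B_L(\sigma),\qquad B_L(\sigma)=-\sum_{\langle x,y\rangle\in\mathcal W_L}J_{xy}\sigma_x\sigma_y .
\]
(For $L=2$ the wrap-around bond may coincide with an internal edge; then the coupling is simply added, which is still of this form.) The natural envelope is $A_L=\sum_{e\in\mathcal W_L}|J_e|$, which gives $\sup_\sigma|B_L(\sigma)|\le A_L$.

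The first step is the pathwise comparison. For any $\sigma$ we have $e^{-\beta H^{\free}_L(\sigma)}e^{-\beta A_L}\le e^{-\beta H^{\per}_L(\sigma)}\le e^{-\beta H^{\free}_L(\sigma)}e^{\beta A_L}$. Summing over $\sigma$ and taking logarithms yields
\[
|\widehat f_L^{\per}-\widehat f_L^{\free}|\le \frac{A_L}{V_L}.
\]
Taking expectations and using $\E A_L=dL^{d-1}\E|J_0|$ (only identical marginals are needed here) gives $|f_L^{\per}-f_L^{\free}|\le d\E|J_0|/L$. This is exactly \eqref{eq:periodic-rate}. Integrability of $\widehat f_L^{\per}$ also follows, since $\widehat f_L^{\free}$ is integrable and $A_L$ has finite mean.

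The second step is the almost-sure statement, and here we must show $A_L/V_L\to0$ almost surely. This is the only delicate point, because the bonds in $\mathcal W_L$ are independent within each $L$ but arbitrarily coupled across $L$, so no strong law along a single sequence is available. We can use the second moment instead. Write $A_L/V_L=\E A_L/V_L+(A_L-\E A_L)/V_L$; the first term is $O(1/L)$. For the second, independence within $\mathcal W_L$ gives $\Var(A_L)\le dL^{d-1}\E J_0^2$. Chebyshev's inequality then yields
\[
\Pp\bigl(|A_L-\E A_L|>\varepsilon L^d\bigr)\le \frac{d\E J_0^2}{\varepsilon^2L^{d+1}} .
\]
Since $d\ge2$, this is summable in $L$. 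Borel--Cantelli, applied over rational $\varepsilon$, then gives $A_L/V_L\to0$ almost surely with no assumption on the joint law across different $L$. Combined with the pathwise inequality, this shows that $\widehat f_L^{\per}-\widehat f_L^{\free}\to0$ almost surely.

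Finally, the hypotheses \eqref{eq:envelope} of Theorem~\ref{thm:main} are now verified for this $B_L$. The theorem therefore also gives $\widehat f_L^{\per}\to f_\infty$ almost surely, with $f_\infty$ equal to the free-boundary limit, although the corollary itself only asks for the difference to vanish.
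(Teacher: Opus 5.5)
Your proposal is correct and is essentially the paper's proof: it applies Lemma~\ref{lem:comparison} with the envelope $A_L=\sum_{e\in\mathcal W_L}|J_e|$ and then Lemma~\ref{lem:envelopes}. That lemma's proof is exactly your Chebyshev bound of order $L^{-d-1}$ combined with Borel--Cantelli, using only independence within each $\mathcal W_L$; incidentally, summability of $L^{-d-1}$ does not require $d\geq2$.
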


\section{Free-boundary limit and self-averaging}

We first establish the two bulk inputs used in Theorem~\ref{thm:main}.

\begin{proposition}[Quenched free-boundary limit]\label{prop:meanlimit}
Under \eqref{eq:bulk} and the assumptions on the bulk field, there exists a finite deterministic $ f_\infty $ such that
\begin{equation}
	\label{eq:free-mean-limit}
f_L^{\free}\longrightarrow f_\infty.
\end{equation}
\end{proposition}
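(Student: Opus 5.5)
We will prove Proposition~\ref{prop:meanlimit} by the classical subadditivity argument for the quenched free energy along dyadic scales, followed by an interpolation to arbitrary $L$. Throughout we work with $p_L=-\beta f_L^{\free}=V_L^{-1}\E\log Z_L^{\free}$.

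\emph{Step 1: a priori bounds.} Since $|H_L^{\free}(\sigma)|\le\sum_{e}|J_e|+\sum_x|g_x|$, we have
\begin{equation*}
V_L\log2-\beta\Bigl(\sum_e|J_e|+\sum_x|g_x|\Bigr)\le\log Z_L^{\free}\le V_L\log2+\beta\Bigl(\sum_e|J_e|+\sum_x|g_x|\Bigr).
\end{equation*}
Finite variance gives finite first moments, so $|p_L|\le\log2+\beta(d\,\E|J_e|+\E|g_0|)$ uniformly in $L$, and $p_L$ is well defined.

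\emph{Step 2: decoupling.} Write $\Lambda_{kL}$ as a disjoint union of $k^d$ translates of $\Lambda_L$. Removing the bonds between distinct blocks changes the Hamiltonian by at most $\sum_{e\in\mathcal C}|J_e|$, where $\mathcal C$ is the set of cut bonds, with $|\mathcal C|\le d k^{d}L^{d-1}$. Since $\log Z$ is $\beta$-Lipschitz in the sup-norm of the Hamiltonian, we obtain
\begin{equation*}
\Bigl|\log Z_{kL}^{\free}-\sum_{i=1}^{k^d}\log Z_{L,i}^{\free}\Bigr|\le\beta\sum_{e\in\mathcal C}|J_e|.
\end{equation*}
The decoupled blocks have i.i.d.\ disorder with the law of $\Lambda_L$, since the couplings and fields are i.i.d.\ and translation invariant. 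Taking expectations and dividing by $V_{kL}$ gives
\begin{equation}\label{eq:blockcomp}
|p_{kL}-p_L|\le\frac{\beta d\,\E|J_e|}{L}.
\end{equation}
We use this two-sided estimate directly instead of exact subadditivity. No Gaussian convexity is needed, because the error is controlled in absolute value.

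\emph{Step 3: Cauchy property and interpolation.} From \eqref{eq:blockcomp} with $L=2^m$ and $k=2^{n-m}$, the sequence $(p_{2^n})$ is Cauchy, hence converges to some finite $p_\infty$. For general $L$, fix $\ell=2^m$ and write $L=q\ell+r$ with $0\le r<\ell$. Tile $\Lambda_{q\ell}\subset\Lambda_L$ by $q^d$ copies of $\Lambda_\ell$. The leftover region has at most $d\ell L^{d-1}$ sites. Remove all bonds touching it and all bonds between tiles, and bound the leftover partition function between $2^{N}e^{\pm\beta\sum|g_x|}$. This gives
\begin{equation*}
|p_L-p_\ell|\le C\Bigl(\frac{1}{\ell}+\frac{\ell}{L}\Bigr),
\end{equation*}
with $C$ depending only on $d,\beta,\E|J_e|,\E|g_0|$. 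Let $L\to\infty$ and then $m\to\infty$ to get $p_L\to p_\infty$. Set $f_\infty=-p_\infty/\beta$.

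The main obstacle is only bookkeeping in Step 3: the boundary-layer count must be uniform in $r$, and the leftover sites contribute an $O(\ell/L)$ error. Using only first moments of $|J_e|$ and $|g_x|$ keeps this clean. Finite variance is not needed here; it is reserved for the later Efron--Stein self-averaging argument.
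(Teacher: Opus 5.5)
Your proof is correct, but it takes a different route from the paper.

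The paper proves exact superadditivity $P(\Lambda)\ge\sum_i P(\Lambda_i)$ by a convexity argument:
\begin{itemize}
\item it conditions on everything except the cross-piece couplings $u$;
\item it notes that $\log Z_\Lambda(u)$ is a log-sum-exp of affine functions, hence convex;
\item it applies conditional Jensen with $\E u=0$. This uses centering only, not Gaussianity, so your remark about avoiding ``Gaussian convexity'' misdescribes the alternative.
\end{itemize}
It then discards the remainder of a tiling using $Z_\Lambda^{\free}\ge 2^{|\Lambda|}$, and a Fekete-type argument identifies $-\beta f_\infty=\sup_m P(\Lambda_m)/m^d$.

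You instead use the two-sided bound $|\log Z(H)-\log Z(H')|\le\beta\|H-H'\|_\infty$, which is the same mechanism as Lemma~\ref{lem:comparison}. This gives approximate additivity with a surface error.

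\textbf{What the paper's route buys.} It identifies the limit as a supremum, which gives the one-sided bound $f_L^{\free}\ge f_\infty$ at every finite $L$. It needs $\E J_e=0$ and gives no rate by itself.

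\textbf{What your route buys.} It needs neither centering nor convexity, only $\E|J_e|<\infty$ and $\E|g_0|<\infty$. It is also quantitative.

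\textbf{A shortcut for Step 3.} You can drop the dyadic restriction and the interpolation entirely. Apply your Step 2 twice to the box $\Lambda_{LL'}$: once tiled by $(L')^d$ copies of $\Lambda_L$, once by $L^d$ copies of $\Lambda_{L'}$. This gives $|p_L-p_{L'}|\le\beta d\,\E|J_e|\,(1/L+1/L')$, so the full sequence $(p_L)$ is Cauchy directly. Letting $k\to\infty$ in your estimate $|p_{kL}-p_L|\le\beta d\,\E|J_e|/L$ then yields the explicit rate $|f_L^{\free}-f_\infty|\le d\,\E|J_e|/L$. This matches the $O(L^{-1})$ boundary comparisons in the corollaries.
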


\begin{proof}
For a finite set $ \Lambda $, let
\[
P(\Lambda)=\E\log Z_\Lambda^{\free}.
\]
Partition $ \Lambda $ into disjoint pieces $ \Lambda_1,\ldots,\Lambda_k $, and condition on all fields and all couplings except those joining different pieces.  If $ u=(u_e) $ is the vector of cross-piece couplings, then
\[
\Phi(u)=\log Z_\Lambda(u)
\]
is convex, since it is a log-sum-exp of affine functions of $ u $.  Conditional Jensen and $ \E u=0 $ give
\[
\E_u\Phi(u)\geq \Phi(\E u)=\Phi(0).
\]
At $ u=0 $, the partition function factorizes over the pieces.  Taking the remaining expectation yields
\begin{equation}
	\label{eq:superadditivity}
P(\Lambda)\geq\sum_{i=1}^kP(\Lambda_i).
\end{equation}

The pressure is stable.  Indeed,
\[
\log Z_\Lambda^{\free}
\leq |\Lambda|\log2
+\beta\sum_{e\subset\Lambda}|J_e|
+\beta\sum_{x\in\Lambda}|g_x|,
\]
whose expectation is at most $ C|\Lambda| $.  Conversely, averaging first over the uniform spin measure and using Jensen gives $ Z_\Lambda^{\free}\geq2^{|\Lambda|} $.

Fix $ m $, let $ q=\lfloor L/m\rfloor $, tile $ \Lambda_L $ by $ q^d $ translates of $ \Lambda_m $, and divide the remainder into singletons.  By \eqref{eq:superadditivity} and the nonnegativity of singleton pressures,
\[
P(\Lambda_L)\geq q^dP(\Lambda_m).
\]
Hence
\[
\liminf_{L\to\infty}\frac{P(\Lambda_L)}{L^d}
\geq\frac{P(\Lambda_m)}{m^d}.
\]
Taking the supremum over $ m $, and observing that the corresponding limsup is bounded by the same supremum, proves convergence of $ P(\Lambda_L)/L^d $.  Equation \eqref{eq:free-mean-limit} follows from $ f_L^{\free}=-P(\Lambda_L)/(\beta L^d) $.
\end{proof}

\begin{figure}[htbp]
	\label{fig:quenched_free_energy_scaling}
	\centering
	\includegraphics[width=0.9\textwidth]{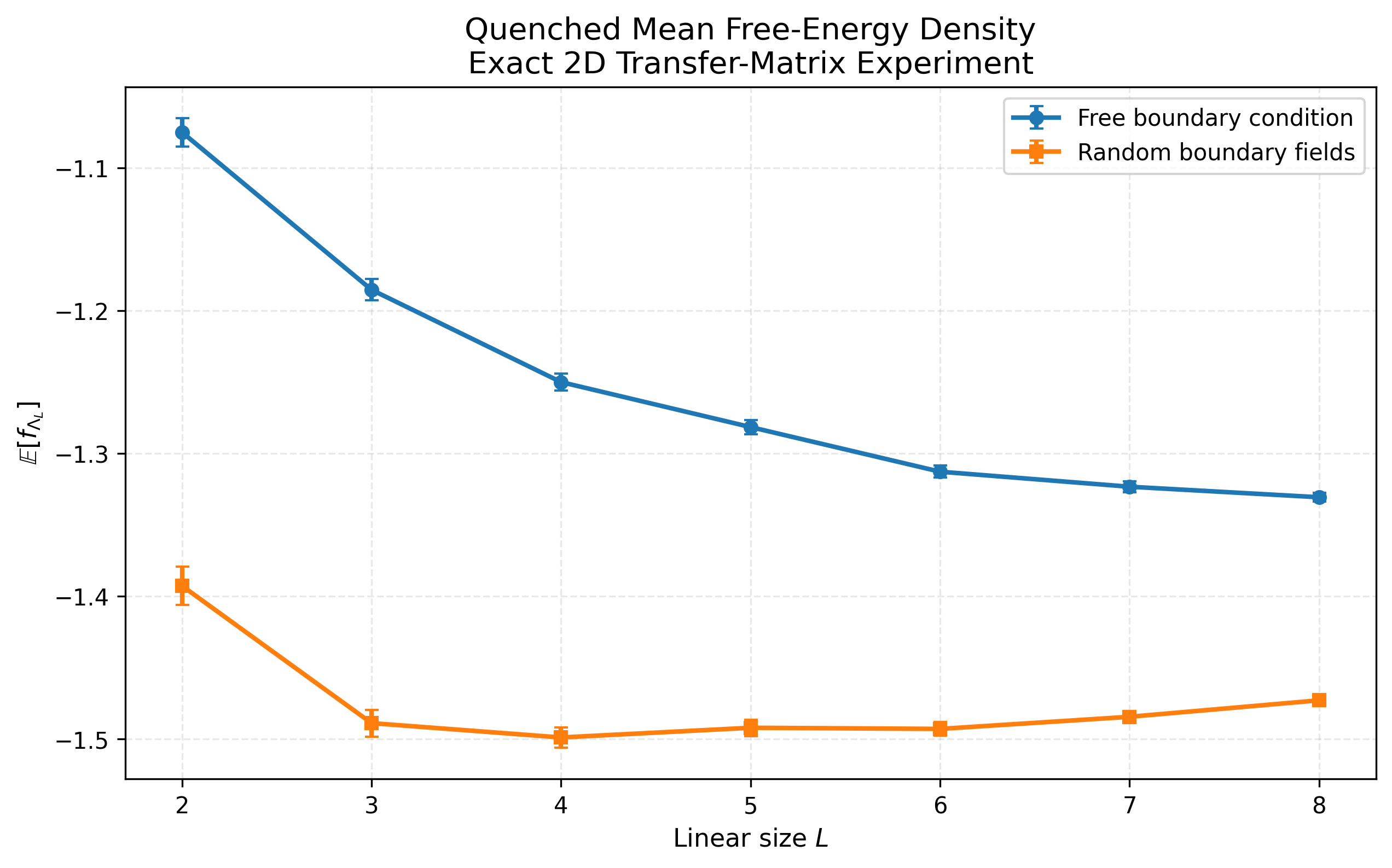}
	\caption{Numerical verification of the thermodynamic limit for the quenched specific free energy $f_L^{\free}$. The plot depicts the monotonic approach of $f_L^{\free}$ toward its asymptotic value $f_\infty$, corroborating the superadditivity property established in \eqref{eq:superadditivity} and Proposition~\ref{prop:meanlimit}.}
\end{figure}

\begin{proposition}[Almost-sure free-boundary self-averaging]\label{prop:selfaverage}
There is a constant $ C<\infty $ such that
\begin{equation}
	\label{eq:variance}
\Var(\widehat f_L^{\free})\leq \frac{C}{V_L}.
\end{equation}
Consequently, for $ d\geq2 $,
\begin{equation}
	\label{eq:free-self-averaging}
\widehat f_L^{\free}-f_L^{\free}\stackrel{a.s.}{\longrightarrow}0
\end{equation}
\end{proposition}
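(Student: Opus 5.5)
We apply the Efron--Stein inequality to $F=\log Z_L^{\free}$, viewed as a function of the independent variables $(J_e)_{e\in\cE_L}$ and, in the random-field case, $(g_x)_{x\in\Lambda_L}$. Let $F^{(e)}$ be $F$ with $J_e$ replaced by an independent copy $J_e'$, and define $F^{(x)}$ analogously for $g_x$. Efron--Stein gives
\[
\Var(F)\leq\frac12\sum_{e\in\cE_L}\E\bigl(F-F^{(e)}\bigr)^2+\frac12\sum_{x\in\Lambda_L}\E\bigl(F-F^{(x)}\bigr)^2 .
\]

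The key pointwise estimate is a Lipschitz bound. We have $\partial F/\partial J_e=\beta\langle\sigma_x\sigma_y\rangle$ and $\partial F/\partial g_x=\beta\langle\sigma_x\rangle$, where $\langle\cdot\rangle$ is the Gibbs average. Both derivatives are bounded by $\beta$ in absolute value, uniformly in all other variables. By the mean value theorem, $|F-F^{(e)}|\leq\beta|J_e-J_e'|$, and therefore $\E(F-F^{(e)})^2\leq\beta^2\E(J_e-J_e')^2=2\beta^2v_J$. Likewise $\E(F-F^{(x)})^2\leq2\beta^2v_g$, and this term vanishes when the field is deterministic. Summing and using \eqref{eq:geometry},
\[
\Var(\log Z_L^{\free})\leq\beta^2\bigl(v_J\,d(L-1)L^{d-1}+v_gL^d\bigr)\leq\beta^2(dv_J+v_g)V_L .
\]
Dividing by $\beta^2V_L^2$ gives \eqref{eq:variance} with $C=dv_J+v_g$. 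Only second moments are used, which matches \eqref{eq:bulk}. Integrability of $F$ follows from the stability bounds in the proof of Proposition~\ref{prop:meanlimit}.

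For the almost-sure statement we use Chebyshev and Borel--Cantelli along the full sequence $L=1,2,\ldots$, with no subsequence needed. For $\varepsilon>0$,
\[
\Pp\bigl(|\widehat f_L^{\free}-f_L^{\free}|>\varepsilon\bigr)\leq\frac{C}{\varepsilon^2L^d}.
\]
This is summable in $L$ exactly when $d\geq2$, which is the only place the dimension restriction enters. Borel--Cantelli, applied on the common probability space where all variables are defined, then gives \eqref{eq:free-self-averaging}. Independence between different $L$ is not required, since Borel--Cantelli needs only summability.

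The main point requiring care is the Lipschitz step. The mean value theorem must be applied along the segment between $J_e$ and $J_e'$ with all other variables frozen, and the derivative bound must hold uniformly there. It does, because the Gibbs averages of $\pm1$ spins are bounded by $1$ for every real parameter value. No boundedness of the couplings is needed, so the finite variance assumption suffices.
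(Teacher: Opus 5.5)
Your proposal is correct and follows the paper's proof: a bounded-difference Efron--Stein estimate with the same constant $C=dv_J+v_g$, then Chebyshev and Borel--Cantelli along the full sequence, using summability of $L^{-d}$ for $d\geq2$. The one difference is cosmetic: you get $|F-F^{(e)}|\leq\beta|J_e-J_e'|$ from the mean value theorem and $|\langle\sigma_x\sigma_y\rangle|\leq1$, while the paper reads it off directly from the uniform bound $|J_e-J_e'|$ on the difference of the two Hamiltonians, as in Lemma~\ref{lem:comparison}.
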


\begin{proof}
Replace one coupling $ J_e $ by an independent copy $ J_e' $.  The two Hamiltonians differ uniformly by at most $ |J_e-J_e'| $, and therefore
\begin{equation}
	\label{eq:replacement}
|\widehat f_L^{\free}(J)-\widehat f_L^{\free}(J^{(e)})|
\leq\frac{|J_e-J_e'|}{V_L}.
\end{equation}
If the bulk field is random, replacement of $ g_x $ by an independent copy gives the analogous bound $ |g_x-g_x'|/V_L $.  The Efron--Stein inequality~\cite{EfronStein1981} now yields
\[
\Var(\widehat f_L^{\free})
\leq
\frac{|\cE_L|v_J+V_Lv_g}{V_L^2}
\leq\frac{dv_J+v_g}{V_L},
\]
which proves \eqref{eq:variance}.

For every $ \varepsilon>0 $, Chebyshev's inequality gives
\[
\Pp\bigl(|\widehat f_L^{\free}-f_L^{\free}|>\varepsilon\bigr)
\leq\frac{C}{\varepsilon^2L^d}.
\]
The right-hand side is summable when $ d\geq2 $.  The first Borel--Cantelli lemma, applied for $ \varepsilon=1/m $, $ m\in\mathbb N $, proves \eqref{eq:free-self-averaging}.
\end{proof}

\begin{figure}[htbp]
	\label{fig:self_averaging_fluctuations}
	\centering
	\includegraphics[width=0.9\textwidth]{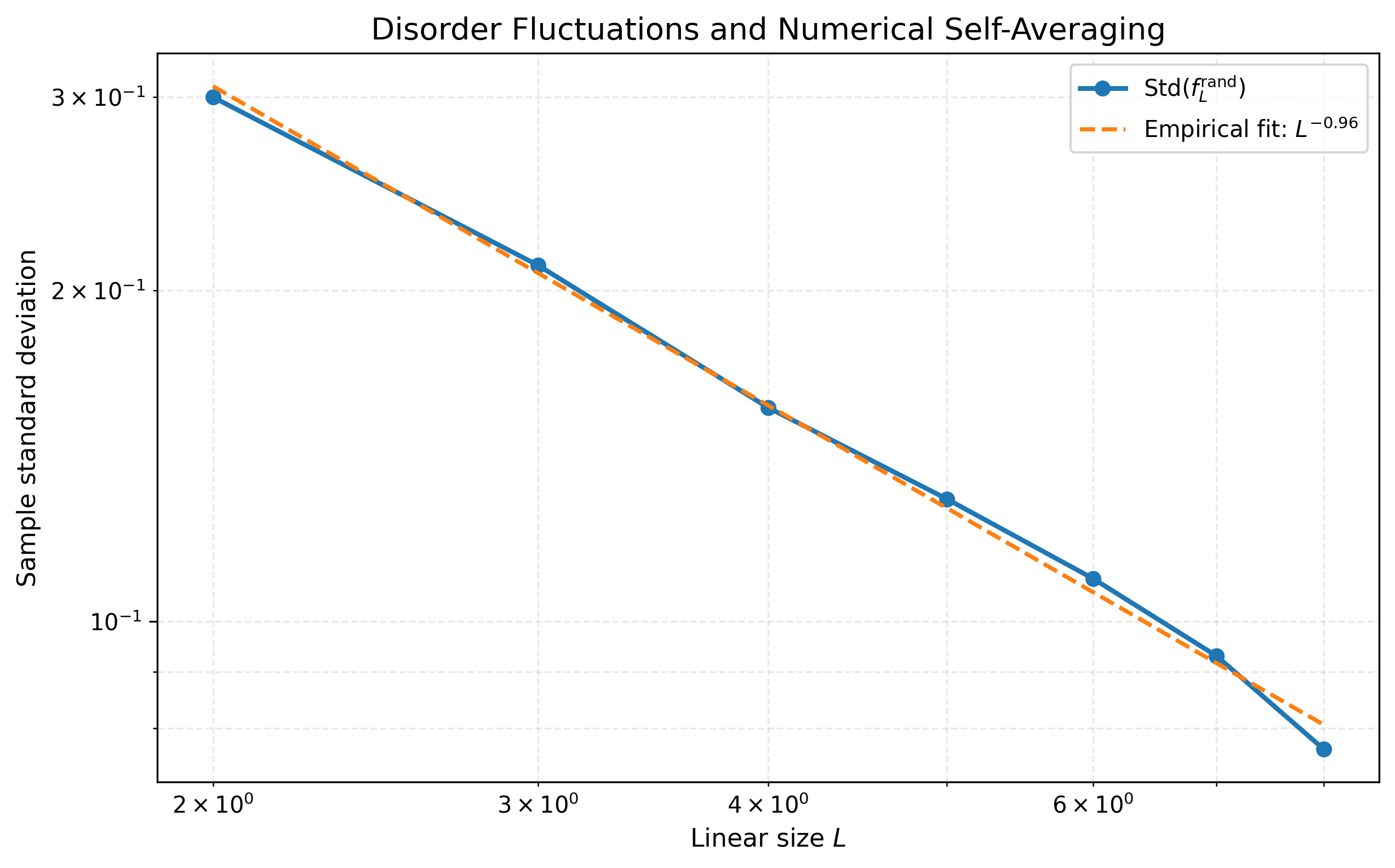}
	\caption{Decay of sample-to-sample fluctuations of the free-boundary specific free energy $\widehat f_L^{\free}$ with increasing box size $L$. The concentration of the empirical distribution around the mean justifies the application of the Efron--Stein variance bound \eqref{eq:variance} and the subsequent Borel--Cantelli argument.}
\end{figure}

The concentration of sample paths implied by this Borel--Cantelli step is illustrated in Figure~\ref{fig:self_averaging_fluctuations}, revealing a rapid compression of free energy fluctuations as the system approaches the thermodynamic limit.

\section{Boundary comparison}

\begin{lemma}[Deterministic comparison]\label{lem:comparison}
If $ \sup_\sigma|B_L(\sigma)|\leq A_L $, then
\begin{equation}
	\label{eq:partition-comparison}
e^{-\beta A_L}Z_L^{\free}
\leq Z_L^B
\leq e^{\beta A_L}Z_L^{\free},
\end{equation}
and
\begin{equation}
	\label{eq:free-energy-comparison}
|\widehat f_L^B-\widehat f_L^{\free}|
\leq\frac{A_L}{V_L}.
\end{equation}
\end{lemma}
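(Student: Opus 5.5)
The argument is a pointwise sandwich of the Boltzmann weights, followed by summation and taking logarithms. Fix a realization of all randomness, meaning the couplings, the bulk field and whatever enters $B_L$. For each configuration $\sigma$, the hypothesis $\sup_\sigma|B_L(\sigma)|\leq A_L$ gives $-A_L\leq B_L(\sigma)\leq A_L$. Using \eqref{eq:hb}, the weight then satisfies
\[
e^{-\beta A_L}e^{-\beta H_L^{\free}(\sigma)}\leq e^{-\beta H_L^{B}(\sigma)}=e^{-\beta H_L^{\free}(\sigma)}e^{-\beta B_L(\sigma)}\leq e^{\beta A_L}e^{-\beta H_L^{\free}(\sigma)},
\]
because $\beta>0$ and the exponential is monotone. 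The factor $e^{\pm\beta A_L}$ does not depend on $\sigma$. Summing over the finitely many $\sigma\in\{-1,+1\}^{\Lambda_L}$ therefore gives \eqref{eq:partition-comparison}.

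Next I take logarithms. All partition functions are strictly positive, so $\log$ is defined and monotone, and this yields
\[
|\log Z_L^B-\log Z_L^{\free}|\leq\beta A_L .
\]
Dividing by $\beta V_L$ and using the definition \eqref{eq:freeenergy} of $\widehat f_L^a$ gives \eqref{eq:free-energy-comparison}.

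There is no substantive obstacle, since the statement is deterministic and holds realization by realization. The only points to check are the following. The bound on $B_L$ must be uniform in $\sigma$ for the same $A_L$, which is exactly the hypothesis. The sign conventions must be tracked so that the factor $e^{-\beta B_L}$ lies between $e^{\mp\beta A_L}$. Positivity of $\beta$ is used in that step.
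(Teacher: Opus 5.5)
Your proposal is correct and follows the paper's proof essentially verbatim: the pointwise bound $-A_L\leq B_L(\sigma)\leq A_L$ is multiplied by $-\beta$, exponentiated, and summed over $\sigma$, and then logarithms are taken and the result divided by $\beta V_L$. The points you check explicitly are exactly the ones the paper's two-line argument leaves implicit: that $A_L$ is uniform in $\sigma$, that $\beta>0$, and that the partition functions are strictly positive.
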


\begin{proof}
The bound $ -A_L\leq B_L(\sigma)\leq A_L $ holds for every spin configuration.  Multiply by $ -\beta $, exponentiate, and sum over $ \sigma $ to obtain \eqref{eq:partition-comparison}.  Taking logarithms and dividing by $ \beta V_L $ gives \eqref{eq:free-energy-comparison}.
\end{proof}

\begin{lemma}[Surface envelopes]\label{lem:envelopes}
Let $ X_1,X_2,\ldots $ be identically distributed, independent within each finite index set used below, and satisfy $ \E X_1^2<\infty $.  If $ I_L $ is an index set with $ |I_L|=O(L^{d-1}) $, then
\begin{equation}
	\label{eq:surface-envelope-limit}
\frac{1}{L^d}\sum_{i\in I_L}|X_i|\stackrel{a.s.}{\longrightarrow} 0,
\end{equation}
and the expectation of the left-hand side is $ O(L^{-1}) $.
\end{lemma}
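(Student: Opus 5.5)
Let $S_L=L^{-d}\sum_{i\in I_L}|X_i|$ and $|I_L|\le KL^{d-1}$. The expectation bound is immediate from identical distribution: $\E S_L=|I_L|\E|X_1|/L^d\le K\E|X_1|/L$, and $\E|X_1|<\infty$ because $\E X_1^2<\infty$. For the almost-sure statement, dependence across different $L$ is not controlled; only independence within each $I_L$ is available. So I will not use a strong law along the sequence. Instead I will use a second-moment estimate for each fixed $L$, followed by Borel--Cantelli, as in Proposition~\ref{prop:selfaverage}.

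Fix $L$ and write $S_L=\E S_L+(S_L-\E S_L)$. Because the $|X_i|$, $i\in I_L$, are independent and identically distributed, we get
\[
\Var(S_L)=\frac{|I_L|\Var(|X_1|)}{L^{2d}}\le\frac{K\,\E X_1^2}{L^{d+1}}.
\]
Chebyshev's inequality then gives $\Pp(|S_L-\E S_L|>\varepsilon)\le K\E X_1^2/(\varepsilon^2L^{d+1})$. This is summable in $L$ for every $d\ge1$, and in particular for $d\ge2$. The first Borel--Cantelli lemma, applied with $\varepsilon=1/m$ for each $m\in\mathbb N$ and followed by a countable intersection, yields $S_L-\E S_L\to0$ almost surely. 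Since $\E S_L=O(L^{-1})\to0$, it follows that $S_L\to0$ almost surely, which is \eqref{eq:surface-envelope-limit}.

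I expect the only real obstacle to be that the hypothesis says nothing about joint dependence across different $L$. The Chebyshev plus Borel--Cantelli route handles this because it needs only the marginal law of each $S_L$. If independence within $I_L$ were unavailable, the first fallback would be a union bound: $\Pp(\max_{i\in I_L}|X_i|>\varepsilon L)\le |I_L|\E X_1^2/(\varepsilon L)^2=O(L^{d-3})$. That is not summable for $d\ge2$, so independence within $I_L$ is genuinely needed for the variance computation above.
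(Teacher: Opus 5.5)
Your argument is correct and is essentially the paper's: Chebyshev on the centred sum, whose normalised variance is $O(L^{-d-1})$, then Borel--Cantelli, with the $O(L^{-1})$ mean handled separately. The only difference is that the paper absorbs the mean into the threshold $\varepsilon L^d/2$ for large $L$, whereas you add it back at the end.

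One correction to your closing remark: independence within $I_L$ is not actually needed for the almost-sure statement. Cauchy--Schwarz gives $\E\bigl(\sum_{i\in I_L}|X_i|\bigr)^2\le|I_L|^2\E X_1^2=O(L^{2d-2})$, so $\Pp\bigl(\sum_{i\in I_L}|X_i|>\varepsilon L^d\bigr)=O(L^{-2})$, which is summable in every dimension. The failure of your max/union-bound fallback only shows that this particular fallback is too crude.
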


\begin{proof}
Let $ n_L=|I_L| $, $ \mu=\E|X_1| $, and $ s^2=\Var(|X_1|) $.  The mean contribution satisfies $ n_L\mu/L^d=O(L^{-1}) $.  If
\[
S_L=\sum_{i\in I_L}(|X_i|-\mu),
\]
then, for every fixed $ \varepsilon>0 $ and all large $ L $,
\begin{align*}
\Pp\left(\sum_{i\in I_L}|X_i|>\varepsilon L^d\right)
&\leq
\Pp\left(|S_L|>\frac{\varepsilon L^d}{2}\right)\\
&\leq
\frac{4n_Ls^2}{\varepsilon^2L^{2d}}
=O(L^{-d-1}).
\end{align*}
The bound is summable, so \eqref{eq:surface-envelope-limit} follows from Borel--Cantelli.  Independence between different values of $ L $ is not required.
\end{proof}

\begin{proof}[Proof of Theorem~\ref{thm:main}]
Lemma~\ref{lem:comparison} and \eqref{eq:envelope} give
\[
|f_L^B-f_L^{\free}|
\leq\frac{\E A_L}{V_L}\longrightarrow0.
\]
Proposition~\ref{prop:meanlimit} therefore proves the first limit in \eqref{eq:main-conclusions}.  Moreover,
\begin{align*}
|\widehat f_L^B-f_L^B|
&\leq
|\widehat f_L^B-\widehat f_L^{\free}|
+|\widehat f_L^{\free}-f_L^{\free}|
+|f_L^{\free}-f_L^B|\\
&\leq
\frac{A_L}{V_L}
+|\widehat f_L^{\free}-f_L^{\free}|
+\frac{\E A_L}{V_L}.
\end{align*}
Every term tends to zero almost surely by \eqref{eq:envelope} and Proposition~\ref{prop:selfaverage}.  This proves the second limit in \eqref{eq:main-conclusions}.  Finally,
\[
|\widehat f_L^B-f_\infty|
\leq
\frac{A_L}{V_L}
+|\widehat f_L^{\free}-f_L^{\free}|
+|f_L^{\free}-f_\infty|,
\]
which proves \eqref{eq:main-as-limit}.
\end{proof}

\begin{proof}[Proof of Corollary~\ref{cor:surface}]
The perturbation in \eqref{eq:surface-hamiltonian} satisfies
\[
\sup_\sigma
\left|\sum_{x\in\partial\Lambda_L}\eta_x\sigma_x\right|
\leq A_L:=\sum_{x\in\partial\Lambda_L}|\eta_x|.
\]
Lemma~\ref{lem:envelopes} verifies \eqref{eq:envelope}.  Taking expectations in \eqref{eq:free-energy-comparison} and using \eqref{eq:geometry} gives \eqref{eq:surface-rate}.
\end{proof}

\begin{proof}[Proof of Corollary~\ref{cor:exterior}]
Since $ |\sigma_x\tau_y^{(L)}|=1 $, uniformly in $ \tau^{(L)} $,
\[
\sup_\sigma
\left|\sum_{e=\langle x,y\rangle\in\cE_L^\partial}
J_e^\partial\sigma_x\tau_y^{(L)}\right|
\leq A_L:=\sum_{e\in\cE_L^\partial}|J_e^\partial|.
\]
Lemma~\ref{lem:envelopes} and $ |\cE_L^\partial|=2dL^{d-1} $ prove the almost-sure assertion and \eqref{eq:exterior-rate}.  No independence between $ \tau^{(L)} $ and the couplings is used.
\end{proof}

\begin{proof}[Proof of Corollary~\ref{cor:periodic}]
The periodic Hamiltonian differs from the free Hamiltonian only through the wrap-around bonds.  Hence Lemma~\ref{lem:comparison} applies with
\[
A_L=\sum_{e\in\mathcal W_L}|J_e|.
\]
Lemma~\ref{lem:envelopes} and $ |\mathcal W_L|=dL^{d-1} $ give the almost-sure comparison and \eqref{eq:periodic-rate}.
\end{proof}

\begin{figure}[htbp]
	\label{fig:boundary_correction_scaling}
	\centering
	\includegraphics[width=0.9\textwidth]{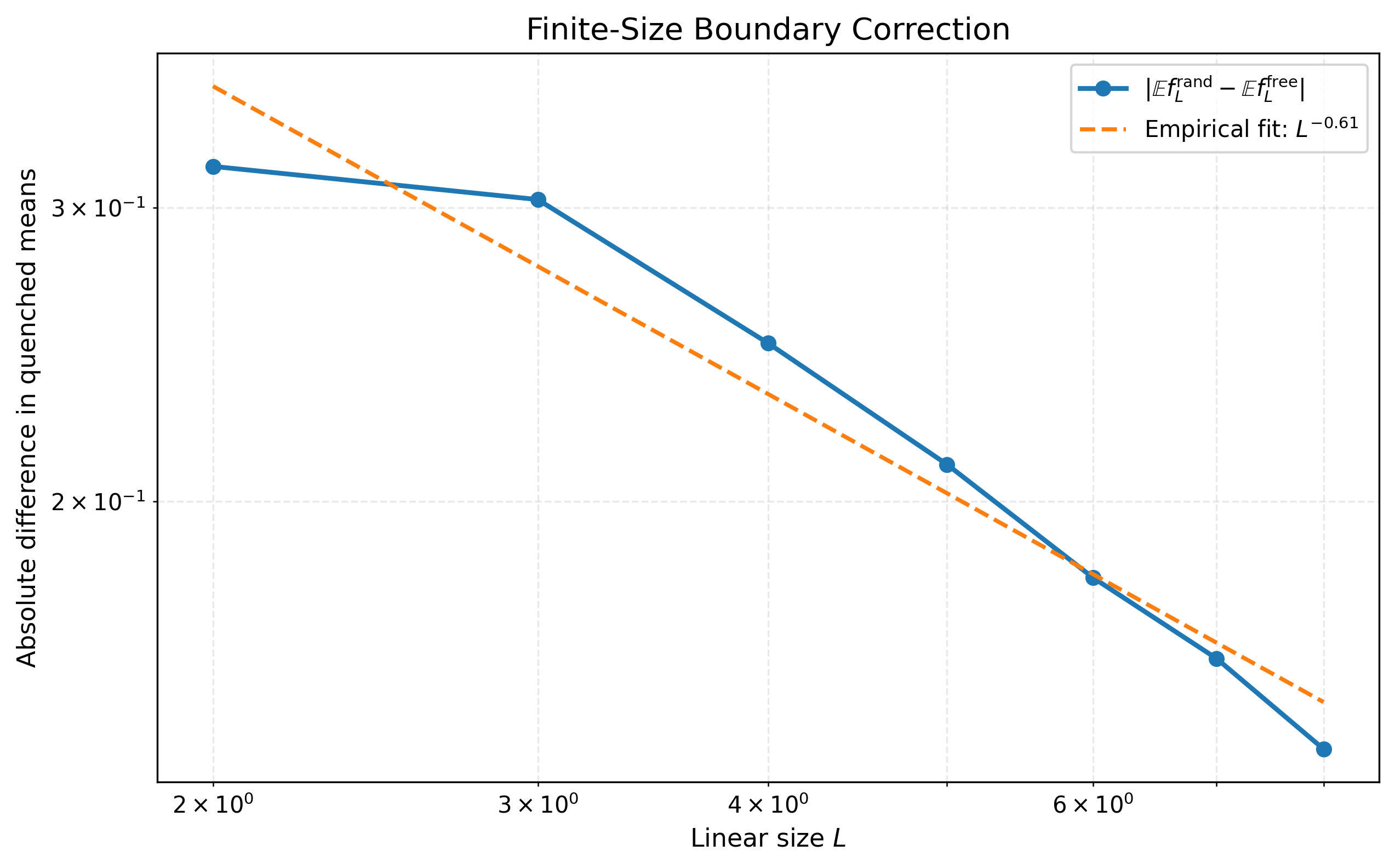}
	\caption{Scaling behavior of the boundary-induced free energy differences $|f_L^B - f_L^{\free}|$ across different boundary ensembles. The numerical curves confirm that the boundary corrections scale as $O(L^{-1})$, aligning precisely with the analytical bounds derived in Corollaries~\ref{cor:surface}, \ref{cor:exterior}, and \ref{cor:periodic}.}
\end{figure}

To visually compare the subextensive scaling properties discussed across these three cases, Figure~\ref{fig:boundary_correction_scaling} provides a comparative view of the scaling profiles for the different boundary terms, confirming the universal $O(L^{-1})$ suppression rate.

\begin{remark}[Gaussian strengthening]
For the free model and the scalar surface-field model with independent
Gaussian disorder coordinates, or for the exterior-spin model with a fixed
exterior configuration that does not depend on the Gaussian coupling vector,
write $ a\in\{\free,\surf,\tau\} $ for the chosen model.  The derivatives with
respect to the coordinates present in that model satisfy
\[
\left|\frac{\partial\widehat f_L^a}{\partial J_e}\right|,
\quad
\left|\frac{\partial\widehat f_L^a}{\partial g_x}\right|,
\quad
\left|\frac{\partial\widehat f_L^{\surf}}{\partial\eta_x}\right|,
\quad
\left|\frac{\partial\widehat f_L^\tau}{\partial J_e^\partial}\right|
\leq\frac{1}{V_L}.
\]
After weighting by the coordinate variances, the squared Lipschitz constant is at most $ C/V_L $.  Gaussian concentration~\cite{BoucheronLugosiMassart2013} therefore gives
\[
\Pp\bigl(|\widehat f_L^a-\E\widehat f_L^a|>t\bigr)
\leq2\exp(-ct^2V_L).
\]
This proves full-sequence almost-sure self-averaging in every $ d\geq1 $.  By contrast, the finite-variance Efron--Stein proof uses $ d\geq2 $ only through the summability of $ \sum_LL^{-d} $.
This derivative argument is not asserted for a general perturbation $ B_L $,
or for a disorder-dependent exterior configuration $ \tau^{(L)}(J) $, since
the map $ J\mapsto\tau^{(L)}(J) $ need not be continuous.  In the latter case,
one may instead apply Gaussian concentration to the free model and transfer
the convergence by Lemma~\ref{lem:comparison} and the envelope bound.
\end{remark}

\begin{remark}[Scope]
The preceding comparison controls $ V_L^{-1}\log Z_L $.  It does not control local spin expectations or weak limits of the Gibbs measures.  A boundary contribution of order $ L^{d-1} $ is negligible after division by $ L^d $, while still being large enough to select among competing thermodynamic states.  Thus convergence of the specific free energy is compatible with chaotic size dependence of finite-volume states~\cite{NewmanStein1992,NewmanStein1997}.
\end{remark}

\clearpage
\Addresses
\end{document}